\documentclass{symmetry10}

\begin{document}

\FirstPageHeading{Boyko&Popovych}

\ShortArticleName{Potential conservation laws of linear evolution equations}

\ArticleName{Simplest potential conservation laws\\ of linear evolution equations}

\Author{Vyacheslav M.~BOYKO~$^\dag$ and Roman O.~POPOVYCH~$^{\dag\ddag}$}
\AuthorNameForHeading{V.M.~Boyko and R.O.~Popovych}
\AuthorNameForContents{BOYKO V.M.\ and POPOVYCH R.O.}
\ArticleNameForContents{Simplest potential conservation laws of linear evolution equations}

\Address{$^\dag$~Institute of Mathematics of NAS of Ukraine,\\
\hphantom{$^\dag$}~3 Tereshchenkivska Str., Kyiv-4, Ukraine}
\EmailD{boyko@imath.kiev.ua, rop@imath.kiev.ua}

\Address{$^\ddag$~Fakult\"at f\"ur Mathematik, Universit\"at Wien,\\
\hphantom{$^\ddag$}~Nordbergstra{\ss}e 15, A-1090 Wien, Austria}

\Abstract{Every simplest potential conservation law of any $(1+1)$-dimensional linear evolution equation of even order proves
induced by a local conservation law of the same equation.
This claim is true also for linear simplest potential conservation laws of $(1+1)$-dimensional linear evolution equations of odd order,
which are related to linear potential systems.
We also derive an effective criterion for checking whether a quadratic conservation law of a simplest linear potential system
is a purely potential conservation law of a $(1+1)$-dimensional linear evolution equation of odd order.}

\section{Introduction}

The notion of potential conservation laws arises as a natural generalization of the notion of local conservation laws of differential equations.
We call a \emph{potential conservation law} any local conservation law of a potential system constructed from a given system~$\mathcal S$
of differential equations via introducing potentials by using local conservation laws of~$\mathcal S$ \cite{Boyko&Popovych:Popovych&Ivanova2005}.
This term first appeared in~\cite{Boyko&Popovych:Bluman&Doran-Wu1995}.
Potential conservation laws may be trivial in the sense that they induced by local conservation laws of the initial system
\cite{Boyko&Popovych:Kunzinger&Popovych2008,Boyko&Popovych:Popovych&Ivanova2005}.
The idea of iterative introduction of potentials by using local conservation laws of a potential system obtained on the previous step
was first suggested in the famous paper \cite{Boyko&Popovych:Wahlquist&Estabrook1975}
and later formalized in the form of the notion of {\em universal Abelian covering}
of differential equations~\cite{Boyko&Popovych:Bocharov&Co1999,Boyko&Popovych:Marvan2004,Boyko&Popovych:Sergyeyev2000}.
Although potential conservation laws of differential equations are interesting and important objects
for study within the framework of symmetry analysis,
nontrivial and complete results on such conservation laws were obtained only for a few classes of differential equations.
See related reviews and references in~\cite{Boyko&Popovych:Bluman&Cheviakov&Anco2010,Boyko&Popovych:Kunzinger&Popovych2008,Boyko&Popovych:Popovych&Ivanova2005}.

As a generalization of similar results from \cite{Boyko&Popovych:Popovych&Ivanova2005} for the linear heat equation,
it was proved in \cite{Boyko&Popovych:Popovych&Kunzinger&Ivanova2008} that
all potential conservation laws of $(1+1)$-dimensional linear second-order evolution equations are trivial.
Local conservation laws of these equations are well understood.
More precisely their spaces of the local conservation laws consist of linear conservation laws
the characteristics of which depend only upon independent variables and are solutions of the corresponding adjoint equations.
This finally solves the problem on potential conservation laws for these equations.

In the present paper we extend results from~\cite{Boyko&Popovych:Popovych&Kunzinger&Ivanova2008}
onto simplest potential conservation laws (i.e. those which involve only single potentials)
of $(1+1)$-dimensional linear second-order evolution equations to the case of an arbitrary order.
The extension to equations of even order is quite direct and exhaustive.
In the case of odd order we restrict our consideration with simplest potential systems
constructed by linear conservation laws.

Consider an arbitrary linear evolution equation of order~$n$ in the two independent variables $t$ and $x$
and the dependent variable~$u$,
\begin{gather}\label{Boyko&Popovych:EqGenLPE}
u_t=\sum_{i=0}^n A^iu_i,
\end{gather}
\looseness=-1
where $A^i=A^i(t,x)$ are arbitrary smooth functions,
$A^n\ne0$, $n\in\{2,3,4,\dots\}$,
$u_t=\partial u/\partial t$, $u_i\equiv \partial^i u/\partial x^i$, $i=1,\dots,n$, $u_0\equiv u$
(i.e., the function~$u$ is assumed to be its zeroth-order derivative).
We also employ, depending upon convenience or necessity, the following notation: $u_x=u_1$, $u_{xx}=u_2$ and $u_{xxx}=u_3$.
$D_t$ and $D_x$ are the operators of total differentiation with respect to the variables~$t$ and~$x$, respectively,
and $\mathop{\rm Div}$ denotes the total divergence,
$\mathop{\rm Div}\mathcal V=D_tF+D_xG$ for the tuple $\mathcal V=(F,G)$ of differential functions~$F$ and~$G$.
See~\cite{Boyko&Popovych:Olver1993,Boyko&Popovych:Popovych&Kunzinger&Ivanova2008} for other related notions.

Our paper is organized as follows:
In the next section
we briefly overview results of~\cite{Boyko&Popovych:Popovych&Sergyeyev2010} on local conservation laws of equations from class~\eqref{Boyko&Popovych:EqGenLPE}
and extend the simplest potential frame, constructed in~\cite{Boyko&Popovych:Popovych&Kunzinger&Ivanova2008}
for $(1+1)$-dimensional linear evolution equations of order two, to the case of an arbitrary order.
The same extension of dual Darboux transformations is carried out in Section~\ref{Boyko&Popovych:SectionOnDualDarbouxTransformation}.
Simplest potential conservation laws of $(1+1)$-dimensional linear evolution equations of even and odd order
are studied in Sections~\ref{Boyko&Popovych:SectionOnSimplestPotCLsofLPEsOfEvenOrder} and~\ref{Boyko&Popovych:SectionOnSimplestPotCLsofLPEsOfOddOrder}, respectively.
In the conclusion we discuss the results obtained and related problems which are still open.

\section{Local conservation laws and simplest\\ potential frame}\label{Boyko&Popovych:SectionOnLocalCLsAndSimplestPoteFrame}

It is well known that any linear partial differential equation~$\mathcal L$ admits cosymmetries
which are functions of independent variables only and are solutions of the corresponding adjoint equation~$\mathcal L^\dag$,
and every solution of~$\mathcal L^\dag$ is a cosymmetry.
Moreover any such cosymmetry is a characteristic of a conservation law of~$\mathcal L$
which contains a conserved vector linear in the dependent variable and its derivatives, cf.~\cite{Boyko&Popovych:Bluman&Shtelen2004}.
Following \cite[Section~5.3]{Boyko&Popovych:Olver1993} we call the conservation laws of this kind \emph{linear}.

It turns out that for any linear $(1+1)$-dimensional evolution equation of even order
its space of conservation laws is exhausted by linear ones
and therefore is isomorphic to the solution space of the corresponding adjoint equation \cite{Boyko&Popovych:Popovych&Sergyeyev2010}.
In~other words any cosymmetry of~\eqref{Boyko&Popovych:EqGenLPE} does not depend upon  derivatives of~$u$,
and a function $\alpha=\alpha(t,x)$ is a cosymmetry of~\eqref{Boyko&Popovych:EqGenLPE} if and only if it is a solution of the adjoint equation
\begin{gather}\label{Boyko&Popovych:EqAdjLPE}
\alpha_t+\sum_{i=0}^n (-1)^i (A^i\alpha)_i=0.
\end{gather}
Any such cosymmetry is a characteristic of a linear conservation law for~\eqref{Boyko&Popovych:EqGenLPE} with
the canonical conserved vector $\mathcal V=(F,G)$, where
\begin{gather}\label{Boyko&Popovych:EqCanonicalCVsOfLinCLsOfGenLinEvolEq}
F=\alpha u,\quad G=\sum_{i=0}^{n-1}\sigma^iu_i
\end{gather}
and the coefficients $\sigma^i=\sigma^i(t,x)$ are found recursively from the relations
\begin{gather}\label{Boyko&Popovych:sigma-lin}
\sigma^{n-1}=-\alpha A^n, \quad \sigma^i=-\alpha A^{i+1}-\sigma^{i+1}_x, \quad
i=n-2,\dots,0.
\end{gather}

For any linear $(1+1)$-dimensional evolution equation of odd order
the space of its conservation laws is spanned by linear and quadratic conservation laws~\cite{Boyko&Popovych:Popovych&Sergyeyev2010}.
There exist both such equations possessing infinite series of quadratic conservation laws of arbitrarily high orders
and ones having no quadratic conservation laws.
For all the formulas and claims obtained for equations of even order to be correct in the case of odd order
it is necessary to restrict the consideration with
linear local conservation laws, associated (linear) potential systems and their linear conservation laws.

Following the presentation of Section~7 of~\cite{Boyko&Popovych:Popovych&Kunzinger&Ivanova2008}
we investigate certain objects related to simplest potential systems of~\eqref{Boyko&Popovych:EqGenLPE}, i.e.
potential systems associated with single local conservation laws~\cite{Boyko&Popovych:Popovych&Ivanova2005}.
The theory of Darboux transformations for linear evolution equations~\cite{Boyko&Popovych:Matveev&Salle1991}
is strongly employed for this.
A detailed study of the simplest potential systems is necessary for understanding the general case since
such systems are components of more general potential systems.

Introducing the \emph{potential}~$v$ by the nontrivial canonical conserved vector~\eqref{Boyko&Popovych:EqCanonicalCVsOfLinCLsOfGenLinEvolEq}
associated with the characteristic~$\alpha=\alpha(t,x)\ne0$,
we obtain the potential system
\begin{equation}\label{Boyko&Popovych:EqPotSysOfLPE}
v_x=\alpha u,\quad v_t=-\sum_{i=0}^{n-1}\sigma^iu_i.
\end{equation}

The initial equation~\eqref{Boyko&Popovych:EqGenLPE} for $u$ is a differential consequence of system~\eqref{Boyko&Popovych:EqPotSysOfLPE}.
Another differential consequence of~\eqref{Boyko&Popovych:EqPotSysOfLPE} is the equation
\begin{equation}\label{Boyko&Popovych:EqPotLPE}
v_t=- \sum_{i=0}^{n-1}\sigma^i \Big(\frac{v_x}{\alpha}\Big)_i
\end{equation}
\looseness=-1
on the potential dependent variable~$v$, which is called the \emph{potential equation}
associated with the linear evolution equation~\eqref{Boyko&Popovych:EqGenLPE} and the characteristic~$\alpha$.
There is a one-to-one correspondence between solutions of the potential system and the potential equation
due to the projection $(u,v)\to v$ on the one hand and due to the formula $u=v_x/\alpha$ on the other,
cf.\ \cite{Boyko&Popovych:Popovych&Kunzinger&Ivanova2008}.
The correspondence between solutions of the initial equation and the potential system is one-to-one only up to a constant summand.

It is convenient to use another dependent variable $w=\psi v$ instead of~$v$ in our further considerations,
where we introduce the notation $\psi=1/\alpha$.
The function~$w$ is called the \emph{modified potential} associated with the characteristic~$\alpha=1/\psi$.
After being written in terms of $w$ and $\psi$ instead of~$v$ and~$\alpha$,
the potential system~\eqref{Boyko&Popovych:EqPotSysOfLPE} and the potential equation~\eqref{Boyko&Popovych:EqPotLPE} take the form
\begin{equation}\label{Boyko&Popovych:EqModifiedPotSysOfLPE}
w_x-\frac{\psi_x}\psi w=u,\quad w_t-\frac{\psi_t}\psi w=-\psi\sum_{i=0}^{n-1}\sigma^iu_i
\end{equation}
and
\begin{equation}\label{Boyko&Popovych:EqModifiedPotLPE}
w_t=-\psi\sum_{i=0}^{n-1}\sigma^i\biggl(w_x-\frac{\psi_x}\psi w\biggr)_i+\frac{\psi_t}\psi w=:\sum_{i=0}^nB^iw_i,
\end{equation}
respectively. Here
\begin{gather*}
B^i=-\psi\sigma^{i-1}+\psi\sum_{j=i}^{n-1}\binom ij\,\sigma^i\biggl(\frac{\psi_x}\psi\biggr)_{j-i},\quad
i=n,n-1,\dots,1,\\
B^0=\frac{\psi_t}\psi+\psi\sum_{j=0}^{n-1}\sigma^i\biggl(\frac{\psi_x}\psi\biggr)_j.
\end{gather*}
In particular $B^n=A^n$ and $B^{n-1}=A^{n-1}-A^n_x$.
System~\eqref{Boyko&Popovych:EqModifiedPotSysOfLPE} and equation~\eqref{Boyko&Popovych:EqModifiedPotLPE}
are called the \emph{modified potential system} and
the \emph{modified potential equation} associated with the characteristic~$\alpha$.
These representations of the potential system and potential equation are more suitable for
the study within the framework of symmetry analysis.

As the function $v=1$ obviously is a solution of~\eqref{Boyko&Popovych:EqPotLPE} and therefore
the function $w=\psi$ is a solution of~\eqref{Boyko&Popovych:EqModifiedPotLPE},
the first equation of~\eqref{Boyko&Popovych:EqModifiedPotSysOfLPE} in fact represents the Darboux transformation~\cite{Boyko&Popovych:Crum1955,Boyko&Popovych:Matveev&Salle1991}
of~\eqref{Boyko&Popovych:EqModifiedPotLPE} to~\eqref{Boyko&Popovych:EqGenLPE}.

\section{Dual Darboux transformation}\label{Boyko&Popovych:SectionOnDualDarbouxTransformation}

The remarkable fact that Darboux covariance holds for $(1+1)$-dimensional linear evolution equations of arbitrary order
was first established in~\cite{Boyko&Popovych:Matveev1979} (see also~\cite[p.~17]{Boyko&Popovych:Matveev&Salle1991}).
In contrast to the previous section for the coherent presentation we assume below that
the initial object of the consideration is the equation~\eqref{Boyko&Popovych:EqModifiedPotLPE},
\begin{equation}\label{Boyko&Popovych:EqModifiedPotLPEUnconstrained}
w_t=\sum_{i=0}^nB^iw_i,
\end{equation}
which is interpreted as an arbitrary representative of the class of linear evolution equations.

Denote by ${\rm DT}[\varphi]$ the Darboux transformation constructed with a nonzero solution~$\varphi$
of~\eqref{Boyko&Popovych:EqModifiedPotLPEUnconstrained}, i.e.,
\[
{\rm DT}[\varphi](w)=w_x-\frac{\varphi_x}\varphi w.
\]
The Darboux transformation possesses the useful property of duality.
We formulate this in the same way as in~\cite{Boyko&Popovych:Popovych&Kunzinger&Ivanova2008},
which is slightly different from~\cite[Section~2.4]{Boyko&Popovych:Matveev&Salle1991}.

\begin{lemma}\label{Boyko&Popovych:LemmaOnDualDarbouxTrans}
Let $w^0$ be a fixed nonzero solution of~\eqref{Boyko&Popovych:EqModifiedPotLPEUnconstrained} and let
the Darboux transformation ${\rm DT}[w^0]$ map~\eqref{Boyko&Popovych:EqModifiedPotLPEUnconstrained} to equation~\eqref{Boyko&Popovych:EqGenLPE}.
Then $\alpha^0=1/w^0$ is a solution of the equation~\eqref{Boyko&Popovych:EqAdjLPE} adjoint to equation~\eqref{Boyko&Popovych:EqGenLPE}
and ${\rm DT}[\alpha^0]$ maps~\eqref{Boyko&Popovych:EqAdjLPE} to the equation adjoint to~\eqref{Boyko&Popovych:EqModifiedPotLPEUnconstrained}, i.e.,
$$\begin{array}{rcl}
\displaystyle \smash{u_t=\sum_{i=0}^n A^iu_i} & \xleftarrow{{\rm DT}[w^0]} & \displaystyle \smash{w_t=\sum_{i=0}^nB^iw_i},
\\[1.8ex]
&\Updownarrow&
\\[1.8ex]
\displaystyle {\alpha_t+\sum_{i=0}^n (-1)^i (A^i\alpha)_i=0} & \xrightarrow{{\rm DT}[\alpha^0]} &
\displaystyle {\beta_t+\sum_{i=0}^n (-1)^i (B^i\beta)_i=0}.
\end{array}
$$
\end{lemma}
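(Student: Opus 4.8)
The plan is to verify the two assertions of the lemma by direct computation at the level of differential operators, exploiting that Darboux transformations of linear evolution equations are intertwining operators. Write the evolution equation~\eqref{Boyko&Popovych:EqModifiedPotLPEUnconstrained} as $w_t=L[w]$ with $L=\sum_{i=0}^nB^iD_x^i$, and~\eqref{Boyko&Popovych:EqGenLPE} as $u_t=M[u]$ with $M=\sum_{i=0}^nA^iD_x^i$. The hypothesis that ${\rm DT}[w^0]$ maps the first equation to the second means precisely the operator identity $(D_t-M)\circ{\rm DT}[w^0]={\rm DT}[w^0]\circ(D_t-L)$, i.e.\ ${\rm DT}[w^0]$ intertwines $L$ and $M$ on the relevant jet space. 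First I would record the elementary factorization identity for the first-order operator $\partial_x-\varphi_x/\varphi$: for any nonzero~$\varphi$ one has $D_x-\varphi_x/\varphi=\varphi\circ D_x\circ\varphi^{-1}$ as operators (acting by multiplication by~$\varphi^{-1}$, then $D_x$, then multiplication by~$\varphi$), and dually $D_x-\varphi_x/\varphi=-\varphi^{-1}\circ D_x^{*}\circ\varphi$ where $D_x^{*}=-D_x$; this is the algebraic engine behind the whole duality.

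Next I would establish the first claim, that $\alpha^0=1/w^0$ solves the adjoint equation~\eqref{Boyko&Popovych:EqAdjLPE}. The cleanest route is to take the formal adjoint of the intertwining relation. Denote by $L^\dag$, $M^\dag$ the formal adjoints; taking adjoints of $(D_t-M)\circ{\rm DT}[w^0]={\rm DT}[w^0]\circ(D_t-L)$ gives $(-D_t-L^\dag)\circ{\rm DT}[w^0]^\dag={\rm DT}[w^0]^\dag\circ(-D_t-M^\dag)$, so ${\rm DT}[w^0]^\dag$ intertwines the adjoint evolution equations in the opposite direction. Using the factorization identity, ${\rm DT}[w^0]^\dag=(D_x-w^0_x/w^0)^\dag=-D_x-w^0_x/w^0=-({w^0})^{-1}\circ D_x\circ w^0$, which up to the harmless overall sign and conjugation by~$w^0$ is exactly the operator ${\rm DT}[\alpha^0]$ with $\alpha^0=1/w^0$ composed with multiplication operators; unwinding this shows simultaneously that $\alpha^0$ is annihilated by the adjoint operator — equivalently, that $\alpha^0=1/w^0$ satisfies~\eqref{Boyko&Popovych:EqAdjLPE} — and that ${\rm DT}[\alpha^0]$ maps the $\alpha$-equation to the $\beta$-equation. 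I would confirm the $\alpha^0$ statement independently by a short direct check: substitute $\alpha=1/w^0$ into~\eqref{Boyko&Popovych:EqAdjLPE} and use that $w^0$ solves~\eqref{Boyko&Popovych:EqModifiedPotLPEUnconstrained} together with the recursion~\eqref{Boyko&Popovych:sigma-lin} relating the $B^i$ to the canonical $\sigma$-coefficients.

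Finally I would assemble the commutative square. The left and right vertical double arrows are just the passage from an equation to its formal adjoint, which is an involution, so once the top horizontal (hypothesis) and the bottom horizontal (just proved) arrows are in place, and once one checks that the operator ${\rm DT}[\alpha^0]$ appearing on the bottom is indeed the adjoint (up to multiplication operators) of ${\rm DT}[w^0]$ appearing on top, commutativity is automatic. Concretely this amounts to the identity ${\rm DT}[1/w^0]=-(w^0)\circ{\rm DT}[w^0]^\dag\circ(w^0)^{-1}$, which follows from the factorization identity by a one-line manipulation. The main obstacle I anticipate is purely bookkeeping rather than conceptual: keeping consistent sign and conjugation conventions for the formal adjoint of a composition and for ${\rm DT}[\varphi]^\dag$, and matching the resulting coefficients with the explicit $B^i$ given after~\eqref{Boyko&Popovych:EqModifiedPotLPE} (in particular checking the stated values $B^n=A^n$, $B^{n-1}=A^{n-1}-A^n_x$ are consistent with the adjoint picture). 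Everything else reduces to the single factorization identity for first-order operators and the fact that taking adjoints reverses the order of composition.
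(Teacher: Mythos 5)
Your treatment of the second assertion is correct and is the expected route: the hypothesis gives the intertwining identity (this is exactly Proposition~\ref{Boyko&Popovych:PropositionOnDarbouxTransAsSplittingOperator} of the paper, which the paper states, like the lemma itself, without an in-text proof), and taking formal adjoints together with ${\rm DT}[w^0]=w^0\circ D_x\circ (w^0)^{-1}$ and ${\rm DT}[w^0]^\dag=-{\rm DT}[1/w^0]$ yields
\[
{\rm DT}[\alpha^0]\circ\biggl(D_t+\sum_{i=0}^n(-D_x)^i\circ A^i\biggr)
=\biggl(D_t+\sum_{i=0}^n(-D_x)^i\circ B^i\biggr)\circ{\rm DT}[\alpha^0],
\]
which is precisely the bottom arrow of the diagram and the commutativity of the square.

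The gap is in the first assertion, that $\alpha^0=1/w^0$ solves \eqref{Boyko&Popovych:EqAdjLPE}. This cannot be obtained by ``unwinding'' the adjointed operator identity, because that identity involves $w^0$ only through $w^0_x/w^0$, which is invariant under $w^0\mapsto e^{h(t)}w^0$, whereas the assertion is destroyed by such a rescaling (and $e^{h(t)}w^0$ is then no longer a solution of \eqref{Boyko&Popovych:EqModifiedPotLPEUnconstrained}). The most your identity gives is found by applying it to $\alpha^0$ and using ${\rm DT}[\alpha^0]\alpha^0=0$: the function $\alpha^0_t+\sum_{i}(-1)^i(A^i\alpha^0)_i$ lies in the kernel of ${\rm DT}[\alpha^0]$, i.e.\ equals $c(t)\alpha^0$ for some function $c(t)$. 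Showing $c\equiv0$ is the substantive half of the lemma; it must use that $w^0$ itself, and not merely its logarithmic derivative, solves \eqref{Boyko&Popovych:EqModifiedPotLPEUnconstrained} (in particular the coefficient $B^0$ enters). Your fallback ``short direct check using the recursion \eqref{Boyko&Popovych:sigma-lin}'' does not close this: \eqref{Boyko&Popovych:sigma-lin} and the formulas for the $B^i$ in Section~\ref{Boyko&Popovych:SectionOnLocalCLsAndSimplestPoteFrame} were derived under the assumption that $1/\psi$ solves the adjoint equation, which is exactly what is to be proved here, so invoking them is circular. What is needed is to extract the relations between the $A^i$, the $B^i$ and $w^0_x/w^0$ from the intertwining identity itself and then verify by direct computation that $w^0\bigl(\alpha^0_t+\sum_i(-1)^i(A^i\alpha^0)_i\bigr)=0$ when $w^0_t=\sum_iB^iw^0_i$ (or otherwise fix the $e^{h(t)}$-ambiguity, e.g.\ by a Lagrange-identity argument); this calculation is not carried out in your proposal. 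As a minor point, the step ``maps solutions to solutions, hence the operator identity holds'' also deserves a sentence (arbitrariness of Cauchy data at fixed $t$), though the paper itself takes it for granted in Proposition~\ref{Boyko&Popovych:PropositionOnDarbouxTransAsSplittingOperator}.
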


\begin{remark}
Similarly to~\cite{Boyko&Popovych:Popovych&Kunzinger&Ivanova2008}
the Darboux transformation ${\rm DT}[\alpha^0]$ is called the \emph{dual} to the Darboux transformation ${\rm DT}[w^0]$.
Since the twice adjoint equation coincides with the initial equation,
the twice dual Darboux transformation is nothing but the initial Darboux transformation.
This implies that `then' in Lemma~\ref{Boyko&Popovych:LemmaOnDualDarbouxTrans} can be replaced by `if and only if'.
\end{remark}

\begin{remark}\label{Boyko&Popovych:RemarkOnDarbouxTransAsLinearMappingOfSolutionSpaceOfLPEs}
The Darboux transformation ${\rm DT}[w^0]$ from Lemma~\ref{Boyko&Popovych:LemmaOnDualDarbouxTrans}
is a linear mapping of the solution space of~\eqref{Boyko&Popovych:EqModifiedPotLPEUnconstrained} to the solution space of~\eqref{Boyko&Popovych:EqGenLPE}.
The kernel of this mapping coincides with the linear span~$\langle w^0\rangle$.
Its image is the whole solution space of~\eqref{Boyko&Popovych:EqGenLPE}. Indeed for any solution~$u$ of~\eqref{Boyko&Popovych:EqGenLPE}
we can find a solution~$w$ of~\eqref{Boyko&Popovych:EqModifiedPotLPEUnconstrained}, mapped to~$u$, by integrating
system~\eqref{Boyko&Popovych:EqModifiedPotSysOfLPE} with respect to~$w$.
By the Frobenius theorem system~\eqref{Boyko&Popovych:EqModifiedPotSysOfLPE} is compatible in view of equation~\eqref{Boyko&Popovych:EqGenLPE}.
Therefore ${\rm DT}[w^0]$ generates a one-to-one linear mapping between
the solution space of~\eqref{Boyko&Popovych:EqModifiedPotLPEUnconstrained}, factorized by~$\langle w^0\rangle$,
and the solution space of~\eqref{Boyko&Popovych:EqGenLPE}.
\end{remark}

In the case of even order~$n$ Lemma~\ref{Boyko&Popovych:LemmaOnDualDarbouxTrans} jointly with Remark~\ref{Boyko&Popovych:RemarkOnDarbouxTransAsLinearMappingOfSolutionSpaceOfLPEs}
can be reformulated in terms of characteristics of
conservation laws. Denote equations~\eqref{Boyko&Popovych:EqGenLPE} and~\eqref{Boyko&Popovych:EqModifiedPotLPEUnconstrained}, where $n\in2\mathbb N$, by~$\mathcal L$
and~$\widehat{\mathcal L}$ for convenience.

\begin{lemma}\label{Boyko&Popovych:LemmaOnDualDarbouxTransInTermsOfChars}
If $w^0$ is a nonzero solution of~$\widehat{\mathcal L}$ and ${\rm DT}[w^0](\widehat{\mathcal L}\,)=\mathcal L$,
then $\alpha^0=1/w^0$ is a characteristic of~$\mathcal L$ and
the Darboux transformation ${\rm DT}[\alpha^0]$ maps the characteristic space of~$\mathcal L$
onto the characteristic space of~$\widehat{\mathcal L}$.
\end{lemma}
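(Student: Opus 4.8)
The plan is to combine Lemma~\ref{Boyko&Popovych:LemmaOnDualDarbouxTrans} with Remark~\ref{Boyko&Popovych:RemarkOnDarbouxTransAsLinearMappingOfSolutionSpaceOfLPEs}, specialised to even order~$n$. For even $n$ the characteristic space of a $(1+1)$-dimensional linear evolution equation coincides with its cosymmetry space (there are no quadratic conservation laws), and by the result of~\cite{Boyko&Popovych:Popovych&Sergyeyev2010} recalled in Section~\ref{Boyko&Popovych:SectionOnLocalCLsAndSimplestPoteFrame} this space is precisely the solution space of the adjoint equation~\eqref{Boyko&Popovych:EqAdjLPE}. So the characteristic space of~$\mathcal L$ is the solution space of the equation displayed in the lower-left corner of the diagram in Lemma~\ref{Boyko&Popovych:LemmaOnDualDarbouxTrans}, and the characteristic space of~$\widehat{\mathcal L}$ is the solution space of the equation in the lower-right corner. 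Hence the claim becomes purely a statement about ${\rm DT}[\alpha^0]$ as a map between these two solution spaces.

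First I would invoke Lemma~\ref{Boyko&Popovych:LemmaOnDualDarbouxTrans}: since $w^0$ is a nonzero solution of~$\widehat{\mathcal L}$ with ${\rm DT}[w^0](\widehat{\mathcal L}\,)=\mathcal L$, the lemma gives that $\alpha^0=1/w^0$ solves~\eqref{Boyko&Popovych:EqAdjLPE}, i.e.\ $\alpha^0$ is a cosymmetry and therefore a characteristic of~$\mathcal L$; this already disposes of the first assertion. The lemma further tells us that ${\rm DT}[\alpha^0]$ maps the equation adjoint to~$\mathcal L$ to the equation adjoint to~$\widehat{\mathcal L}$. Then I would apply Remark~\ref{Boyko&Popovych:RemarkOnDarbouxTransAsLinearMappingOfSolutionSpaceOfLPEs} with the roles of the two equations played by the two adjoint equations and with the distinguished solution being $\alpha^0$ itself (which is nonzero because $w^0$ is): the remark states that ${\rm DT}[\alpha^0]$ is a linear map of the full solution space of the adjoint of~$\mathcal L$ onto the full solution space of the adjoint of~$\widehat{\mathcal L}$, with kernel $\langle\alpha^0\rangle$. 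Translating back through the identification of adjoint-solution spaces with characteristic spaces, this says exactly that ${\rm DT}[\alpha^0]$ maps the characteristic space of~$\mathcal L$ onto the characteristic space of~$\widehat{\mathcal L}$.

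The one point requiring care is that Remark~\ref{Boyko&Popovych:RemarkOnDarbouxTransAsLinearMappingOfSolutionSpaceOfLPEs} is phrased for a Darboux transformation built from a solution of the \emph{source} equation and mapping it to a \emph{target} equation of the form~\eqref{Boyko&Popovych:EqGenLPE}; to apply it in the dual setting I must check that the pair (adjoint of~$\mathcal L$, adjoint of~$\widehat{\mathcal L}$) together with $\alpha^0$ genuinely fits the template of that remark, i.e.\ that $\alpha^0$ is a solution of the source equation (the adjoint of~$\mathcal L$, which it is by the first part) and that ${\rm DT}[\alpha^0]$ indeed sends it to the target equation (supplied by Lemma~\ref{Boyko&Popovych:LemmaOnDualDarbouxTrans}). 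Since the class of linear evolution equations~\eqref{Boyko&Popovych:EqModifiedPotLPEUnconstrained} is closed under taking adjoints, this is immediate, and the surjectivity and kernel description transfer verbatim. I expect this bookkeeping — lining up which equation plays which role and confirming the even-order hypothesis is what makes ``characteristic space'' equal ``adjoint solution space'' — to be the only real content; there is no hard estimate or construction, the substance having already been packed into the preceding lemma and remark.
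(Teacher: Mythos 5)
Your argument is correct and is essentially the paper's own: the paper presents Lemma~\ref{Boyko&Popovych:LemmaOnDualDarbouxTransInTermsOfChars} precisely as the reformulation, for even~$n$, of Lemma~\ref{Boyko&Popovych:LemmaOnDualDarbouxTrans} combined with Remark~\ref{Boyko&Popovych:RemarkOnDarbouxTransAsLinearMappingOfSolutionSpaceOfLPEs}, using the fact that for even order the characteristic space coincides with the solution space of the adjoint equation. Your extra check that the dual configuration (adjoint of~$\mathcal L$, seed~$\alpha^0$, adjoint of~$\widehat{\mathcal L}$) fits the template of the remark is exactly the bookkeeping the paper leaves implicit, and it goes through as you say.
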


\begin{proposition}\label{Boyko&Popovych:PropositionOnDarbouxTransAsSplittingOperator}
Let $w^0$ be a fixed nonzero solution of~\eqref{Boyko&Popovych:EqModifiedPotLPEUnconstrained} and let
the Darboux transformation ${\rm DT}[w^0]$ map~\eqref{Boyko&Popovych:EqModifiedPotLPEUnconstrained} to equation~\eqref{Boyko&Popovych:EqGenLPE}.
Then the operator associated with ${\rm DT}[w^0]$ is a splitting operator for
the pair of operators associated with equations~\eqref{Boyko&Popovych:EqGenLPE} and~\eqref{Boyko&Popovych:EqModifiedPotLPEUnconstrained}, i.e.,
\[
\biggl(\partial_t-\sum_{i=0}^n A^i\partial_x^i\biggr){\rm DT}[w^0]={\rm DT}[w^0]\biggl(\partial_t-\sum_{i=0}^nB^i\partial_x^i\biggr).
\]
\end{proposition}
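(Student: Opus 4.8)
The plan is to verify the operator identity directly by computing the action of both sides on an arbitrary smooth function $w=w(t,x)$ (not necessarily a solution of anything), and then comparing coefficients of the derivatives $w_i$. Write $L=\partial_t-\sum_i A^i\partial_x^i$, $\widehat L=\partial_t-\sum_i B^i\partial_x^i$, and $T={\rm DT}[w^0]=\partial_x-(w^0_x/w^0)\partial_x^0=\partial_x-\rho$, where $\rho:=w^0_x/w^0$. The claim $LT=T\widehat L$ is an identity of linear differential operators in $t$ and $x$ with coefficients depending on $t,x$. I would first record the two pieces of data that relate the coefficients: (i) the coefficients $B^i$ are exactly those produced by conjugating $\widehat L$ through $T$ \emph{up to the defining recursion}, i.e.\ by construction of the (modified) potential equation in Section~\ref{Boyko&Popovych:SectionOnLocalCLsAndSimplestPoteFrame} the transformation ${\rm DT}[w^0]$ maps $\widehat{\mathcal L}$ to $\mathcal L$; and (ii) $w^0$ satisfies $\widehat L w^0=0$, equivalently $\rho_t=\partial_x\bigl(\sum_i B^i(\partial_x-\rho)^{i-1}\cdots\bigr)$ — the Riccati-type relation that the logarithmic derivative of a solution obeys.

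The cleanest route, and the one I would actually carry out, is to avoid recomputing anything and instead invoke the splitting/factorization mechanism already implicit in the paper. Since $T$ has a one-dimensional kernel $\langle w^0\rangle$ and is surjective onto the solution space of $\mathcal L$ (Remark~\ref{Boyko&Popovych:RemarkOnDarbouxTransAsLinearMappingOfSolutionSpaceOfLPEs}), and since ${\rm DT}[w^0]$ by definition sends solutions of $\widehat{\mathcal L}$ to solutions of $\mathcal L$, the composite operator $LT-T\widehat L$ is a linear differential operator that annihilates the entire solution space of $\widehat{\mathcal L}$. I would then argue that an $x$-order argument forces $LT-T\widehat L$ to be the zero operator: $LT$ and $T\widehat L$ are each differential operators of order $n+1$ in $x$ and order $1$ in $t$, and a short computation shows their top-order parts ($A^n\partial_x^{n+1}$ coming from $LT$, and $B^n\partial_x^{n+1}=A^n\partial_x^{n+1}$ from $T\widehat L$, using $B^n=A^n$) already coincide, so $P:=LT-T\widehat L$ has $x$-order at most $n$ and $t$-order at most $0$, hence is of the form $\sum_{k=0}^n C^k(t,x)\partial_x^k$. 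Such an operator that kills every solution of the $n$th-order-in-$x$, first-order-in-$t$ equation $\widehat L w=0$ must vanish identically, because the solution space of $\widehat{\mathcal L}$ is large enough (for fixed $t$ one can prescribe $w,w_x,\dots,w_{n-1}$ freely at a point, and the $\partial_t$-free operator $P$ cannot detect the $t$-dependence), forcing all $C^k=0$.

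Alternatively — and this is the fallback if one wants a self-contained computational proof rather than the solution-space argument — I would just expand $LT$ and $T\widehat L$ as operators and match coefficients. Using the Leibniz rule, $L T = \partial_t\partial_x - \partial_t\rho - \sum_i A^i\partial_x^{i+1} + \sum_i A^i\partial_x^i\rho$, while $T\widehat L = \partial_x\partial_t - \sum_i\partial_x(B^i\partial_x^i) - \rho\partial_t + \rho\sum_i B^i\partial_x^i$. The $\partial_t\partial_x=\partial_x\partial_t$ terms cancel; the remaining identity becomes a family of relations among $A^i$, $B^i$, $\rho$ and their $x$-derivatives, one for each power $\partial_x^k$, $k=0,\dots,n$. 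The relations for $k\ge1$ are precisely the defining formulas for the $B^i$ (the same combinatorial identity, with binomial coefficients, that appears in the display for $B^i$ in Section~\ref{Boyko&Popovych:SectionOnLocalCLsAndSimplestPoteFrame} after the substitution $\psi_x/\psi=-\rho$, i.e.\ the relation between $A^i$ and $B^i$ induced by Darboux conjugation), and they hold identically without using the equation. The single relation for $k=0$ is the only one that requires $\widehat L w^0=0$: it reduces to the Riccati identity $\rho_t = \text{(derivative of the $x$-flux evaluated at $w^0$)}$, which is exactly the statement that $w^0$ solves $\widehat{\mathcal L}$.

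I expect the main obstacle to be purely bookkeeping: reorganizing the iterated Leibniz expansions $\partial_x^i\circ(\text{multiplication by }\rho)$ and $\partial_x^i\circ(\text{multiplication by }B^i)$ into a clean coefficient-by-coefficient comparison, and then recognizing the resulting binomial identities as the definition of $B^i$ rather than grinding them out by hand. The conceptual content — that Darboux conjugation is by construction an intertwiner, so the intertwining identity must hold — is immediate; making the $k=0$ coefficient match transparently, by citing $\widehat L w^0=0$ in the right form, is the one place where care is needed. For the writeup I would lead with the solution-space/order argument of the second paragraph, since it is short and conceptual, and relegate the explicit coefficient computation to a remark or omit it as routine.
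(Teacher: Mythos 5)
Your proposal is essentially sound, but note first that the paper itself offers no proof of this proposition: it is stated as a known consequence of Darboux covariance of linear evolution equations (the intertwining property going back to Matveev--Salle), with the coefficient formulas for the $B^i$ already displayed in Section~\ref{Boyko&Popovych:SectionOnLocalCLsAndSimplestPoteFrame}. Your ``fallback'' coefficient-matching computation is precisely the standard argument behind that omitted proof: the $\partial_t\partial_x$ terms cancel, the coefficients of $\partial_x^k$ for $k\ge1$ vanish by the very relations defining the $B^i$ (equivalently $A^i$) under Darboux conjugation, and only the zeroth-order coefficient needs $\widehat L w^0=0$, through the Riccati-type identity for $\rho=w^0_x/w^0$ (for $n=2$ this is $\rho_t=\rho_{xx}+2\rho\rho_x+V_x$). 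So on that route you are doing what the paper implicitly takes as routine, and your structural diagnosis of where the hypothesis on $w^0$ enters is correct.

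The argument you propose to lead with, however, has two weak points as written. First, an off-by-one: after the top-order cancellation $P:=LT-T\widehat L=\sum_{k=0}^{n}C^k(t,x)\partial_x^k$, so to force all $C^k=0$ you must be able to prescribe $w,w_x,\dots,w_n$ (not only up to $w_{n-1}$) freely at a point. Second, and more substantively, prescribing an arbitrary $n$-jet at a point by \emph{actual} local solutions is not automatic for a general linear evolution equation with merely smooth coefficients (the Cauchy problem may be ill-posed or degenerate, and no analyticity is assumed), so ``the solution space is large enough'' needs justification. Both defects disappear if you phrase the annihilation argument on the infinite prolongation of $\widehat{\mathcal L}$ rather than on solutions: on the equation manifold the pure $x$-derivatives $w,w_1,w_2,\dots$ are independent coordinates, $T\widehat Lw$ vanishes there identically, and the hypothesis that ${\rm DT}[w^0]$ maps~\eqref{Boyko&Popovych:EqModifiedPotLPEUnconstrained} to~\eqref{Boyko&Popovych:EqGenLPE} (understood, as is standard in this setting, modulo differential consequences of $\widehat Lw=0$) says that $Pw=\sum_k C^kw_k$ vanishes identically in these coordinates, whence $C^k=0$ for all $k$ --- and $B^n=A^n$ then drops out rather than having to be cited separately. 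With that reformulation your conceptual proof is complete; as literally written, the solution-space step is the one place a referee would object.
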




\section{Simplest potential conservation laws: even order}\label{Boyko&Popovych:SectionOnSimplestPotCLsofLPEsOfEvenOrder}

If a potential system is constructed by introducing a potential~$v$ with a single local conservation law of~\eqref{Boyko&Popovych:EqGenLPE},
each of its local conservation laws is a \emph{simplest potential conservation law} of~\eqref{Boyko&Popovych:EqGenLPE}, cf.~\cite{Boyko&Popovych:Popovych&Ivanova2005}.
We say that a simplest potential conservation law $\bar{\mathcal F}$ of~\eqref{Boyko&Popovych:EqGenLPE}
is \emph{induced} by a local conservation law~$\mathcal F$ of~\eqref{Boyko&Popovych:EqGenLPE}
if $\bar{\mathcal F}$ contains a conserved vector which is the pullback of a conserved vector from~$\mathcal F$
with respect to the projection
\[
\varpi\colon J^\infty(t,x\,|\,u,v)\to J^\infty(t,x\,|\,u),
\]
where $J^\infty(t,x\,|\,u,v)$ (resp.\ $J^\infty(t,x\,|\,u)$) denotes the jet space with the independent variables~$t$ and~$x$ and
the dependent variables $u$ and $v$ (resp.\ the dependent variable~$u$).
In view of Proposition~2 from~\cite{Boyko&Popovych:Kunzinger&Popovych2008} this is equivalent to that
the conservation law $\bar{\mathcal F}$ contains a conserved vectors depending upon~$t$, $x$ and derivatives of~$u$.

\begin{theorem}\label{Boyko&Popovych:TheoremOnSimplestPotCLsOfEvenOrderLinEvolEq}
Every simplest potential conservation law of any $(1+1)$-dimensional linear evolution equation of even order
is induced by a local conservation law of the same equation.
\end{theorem}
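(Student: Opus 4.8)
The plan is to reduce the problem to a statement about conservation laws of the modified potential equation and then transport it through the Darboux transformation, using the fact that for even order every conservation law is linear. First I would recall that a simplest potential conservation law of~\eqref{Boyko&Popovych:EqGenLPE} is, by definition, a local conservation law of the potential system~\eqref{Boyko&Popovych:EqPotSysOfLPE}; passing to the modified potential~$w=\psi v$, this is the same as a local conservation law of the modified potential system~\eqref{Boyko&Popovych:EqModifiedPotSysOfLPE}. Since there is a one-to-one correspondence between solutions of the modified potential system and the modified potential equation~\eqref{Boyko&Popovych:EqModifiedPotLPE} via the projection $(u,w)\mapsto w$ and the reconstruction $u=w_x-(\psi_x/\psi)w$, the conservation laws of the system are in bijection with those of the single equation~\eqref{Boyko&Popovych:EqModifiedPotLPEUnconstrained}. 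Because~\eqref{Boyko&Popovych:EqModifiedPotLPEUnconstrained} is itself a $(1+1)$-dimensional linear evolution equation of even order~$n$, its space of conservation laws is exhausted by linear ones and is isomorphic to the solution space of its adjoint equation.

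The core of the argument is then to match these linear conservation laws with local conservation laws of~\eqref{Boyko&Popovych:EqGenLPE} itself. Here I would invoke Lemma~\ref{Boyko&Popovych:LemmaOnDualDarbouxTransInTermsOfChars}: taking $w^0=\psi$ (which is a solution of~\eqref{Boyko&Popovych:EqModifiedPotLPEUnconstrained} since $v=1$ solves~\eqref{Boyko&Popovych:EqPotLPE}), the dual Darboux transformation ${\rm DT}[\alpha^0]$ with $\alpha^0=1/w^0=\alpha$ maps the characteristic space of~$\mathcal L=$~\eqref{Boyko&Popovych:EqGenLPE} \emph{onto} the characteristic space of~$\widehat{\mathcal L}=$~\eqref{Boyko&Popovych:EqModifiedPotLPEUnconstrained}. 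Since characteristics of linear conservation laws are exactly the cosymmetries, i.e.\ solutions of the adjoint equations, and since the correspondence between conservation laws and characteristics is one-to-one for these equations, every linear conservation law of~\eqref{Boyko&Popovych:EqModifiedPotLPEUnconstrained} arises as the image under~${\rm DT}[\alpha^0]$ of a linear conservation law of~\eqref{Boyko&Popovych:EqGenLPE}. The remaining step is to verify that this identification at the level of characteristics is precisely the statement that the corresponding conserved vector of the potential system is (equivalent to) the pullback under~$\varpi$ of a conserved vector of~\eqref{Boyko&Popovych:EqGenLPE}, which by the criterion quoted after Proposition~2 of~\cite{Boyko&Popovych:Kunzinger&Popovych2008} is exactly what ``induced'' means.

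Concretely, I would take an arbitrary conserved vector $\bar{\mathcal V}=(\bar F,\bar G)$ of the potential system, rewrite it in the modified variables, use the bijection with conservation laws of~\eqref{Boyko&Popovych:EqModifiedPotLPEUnconstrained} to replace it (up to adding a trivial conserved vector) by the canonical conserved vector associated with some cosymmetry~$\beta=\beta(t,x)$ of~\eqref{Boyko&Popovych:EqModifiedPotLPEUnconstrained}, then use surjectivity of ${\rm DT}[\alpha^0]$ on characteristic spaces to write $\beta={\rm DT}[\alpha^0](\alpha)$ for some cosymmetry~$\alpha$ of~\eqref{Boyko&Popovych:EqGenLPE}, and finally chase through the explicit formulas~\eqref{Boyko&Popovych:EqCanonicalCVsOfLinCLsOfGenLinEvolEq}–\eqref{Boyko&Popovych:sigma-lin} to check that the resulting conserved vector of the potential system differs from the $\varpi$-pullback of the canonical conserved vector of the $\alpha$-conservation law of~\eqref{Boyko&Popovych:EqGenLPE} only by a total divergence. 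Equivalently, and more cleanly, I would observe that by Proposition~\ref{Boyko&Popovych:PropositionOnDarbouxTransAsSplittingOperator} the operator realizing ${\rm DT}[w^0]$ intertwines the two evolution operators, so its formal adjoint intertwines the adjoint operators and hence maps conserved vectors compatibly; the $\varpi$-pullback is built into this intertwining because ${\rm DT}[w^0]$ expresses $u$ through $w$ and its $x$-derivatives with coefficients depending only on $t,x$.

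The main obstacle I expect is not the passage through the Darboux transformation itself, which Lemmas~\ref{Boyko&Popovych:LemmaOnDualDarbouxTrans} and~\ref{Boyko&Popovych:LemmaOnDualDarbouxTransInTermsOfChars} and Proposition~\ref{Boyko&Popovych:PropositionOnDarbouxTransAsSplittingOperator} already package, but rather the bookkeeping needed to show that the bijection between conservation laws of the potential system and conservation laws of the single equation~\eqref{Boyko&Popovych:EqModifiedPotLPEUnconstrained} is compatible, on the nose, with the notion of being induced by a local conservation law of~\eqref{Boyko&Popovych:EqGenLPE}: one has to be careful that ``up to trivial conserved vectors'' on the single-equation side corresponds to ``up to trivial conserved vectors and total divergences that survive the pullback'' on the system side, and that the constant-summand ambiguity in $u\leftrightarrow v$ does not introduce genuinely potential (non-induced) conservation laws. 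Verifying that this ambiguity contributes only trivial conserved vectors, and therefore that no purely potential conservation law can appear, is the delicate point; once it is settled, the even-order result follows because the linearity of all conservation laws leaves no room for anything beyond the image of~${\rm DT}[\alpha^0]$.
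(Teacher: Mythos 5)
Your proposal follows essentially the same route as the paper's proof: pass to the modified potential system, reduce its conservation laws to those of the modified potential equation, invoke the even-order fact that all its conservation laws are linear, use the dual Darboux transformation (Lemma~\ref{Boyko&Popovych:LemmaOnDualDarbouxTransInTermsOfChars} with $w^0=\psi$, $\alpha^0=\alpha$) to write every characteristic $\beta$ as ${\rm DT}[\alpha]\tilde\alpha$ for a characteristic $\tilde\alpha$ of~\eqref{Boyko&Popovych:EqGenLPE}, and verify by an explicit divergence identity that the corresponding conserved vector of the system is equivalent to the pullback of a local conserved vector. The only step you elide is the paper's opening normalization that potentials introduced with equivalent conserved vectors are related by $\tilde v=v+f[u]$ (so one may restrict to canonical conserved vectors), and the ``delicate point'' you flag is settled in the paper by the short computation $\beta w+\tilde\alpha u=D_x(\tilde\alpha w)$, so there is no genuine gap.
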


\begin{proof}
Potentials~$v$ and~$\tilde v$ introduced with equivalent conserved vectors are connected by the transformation
$\tilde v=v+f[u]$, where $f[u]$ is a function of~$t$, $x$ and derivatives of~$u$.
This transformation preserves the property of inducing simplest potential conservation laws by local ones.
Therefore exhaustively to investigate simplest potential conservation laws of equations from the class~\eqref{Boyko&Popovych:EqGenLPE} with even~$n$
it is sufficient to study local conservation laws of potential systems of the form~\eqref{Boyko&Popovych:EqPotSysOfLPE}
associated with canonical conserved vectors of the form~\eqref{Boyko&Popovych:EqCanonicalCVsOfLinCLsOfGenLinEvolEq}.

We fix an equation from the class~\eqref{Boyko&Popovych:EqGenLPE} and its characteristic~$\alpha$
and consider the corresponding potential system~\eqref{Boyko&Popovych:EqPotSysOfLPE}.
As the usual potential~$v$ is connected with the modified potential~$w$ via a point transformation,
we can investigate conservation laws of the modified potential system~\eqref{Boyko&Popovych:EqModifiedPotSysOfLPE} instead of~\eqref{Boyko&Popovych:EqPotSysOfLPE}.
Up to equivalence of conserved vectors on the solution set of~\eqref{Boyko&Popovych:EqModifiedPotSysOfLPE},
we can exclude derivatives of~$u$ from any conserved vector of~\eqref{Boyko&Popovych:EqModifiedPotSysOfLPE}.
In other words each local conservation law~$\bar{\mathcal F}$ of the modified potential system~\eqref{Boyko&Popovych:EqModifiedPotSysOfLPE} contains
a conserved vector depending solely on~$t$, $x$ and derivatives of~$w$ and therefore
is induced by a local conservation law of the modified potential equation~\eqref{Boyko&Popovych:EqModifiedPotLPE}.

As equation~\eqref{Boyko&Popovych:EqModifiedPotLPE} also is a $(1+1)$-dimensional linear evolution equation of even order
as the initial equation, its space of conservation laws is exhausted by linear conservation laws,
cf.\ the discussion in the beginning of Section~\ref{Boyko&Popovych:SectionOnLocalCLsAndSimplestPoteFrame}.
An arbitrary characteristic~$\beta$ of~\eqref{Boyko&Popovych:EqModifiedPotLPE} depends only upon~$t$ and~$x$ and satisfies the equation adjoint to~\eqref{Boyko&Popovych:EqModifiedPotLPE}.
In view of Lemma~\ref{Boyko&Popovych:LemmaOnDualDarbouxTransInTermsOfChars}
there exists a characteristic~$\tilde\alpha$ of~\eqref{Boyko&Popovych:EqGenLPE} such that $\beta={\rm DT}[\alpha]\tilde\alpha$.

The conserved vectors~$\mathcal V^1$ and~$\mathcal V^2$ of the modified potential system~\eqref{Boyko&Popovych:EqModifiedPotSysOfLPE},
which are the pullbacks of the canonical conserved vectors of
the initial equation~\eqref{Boyko&Popovych:EqGenLPE} and the modified potential equation~\eqref{Boyko&Popovych:EqModifiedPotLPE},
associated with the characteristic~$\tilde\alpha$ and~$\beta$, respectively, are equivalent.
Indeed the sum of the densities of~$\mathcal V^2$ and~$\mathcal V^1$ is
\[
\beta w+\tilde\alpha u=\Bigl(\tilde\alpha_x-\frac{\alpha_x}{\alpha}\tilde\alpha\Bigr)w
+\tilde\alpha\Bigl(w_x+\frac{\alpha_x}{\alpha}w\Bigr)=D_x(\tilde\alpha w).
\]
Denote by~$\mathcal V^0$ the trivial conserved vector $(-D_x(\tilde\alpha w),D_t(\tilde\alpha w))$.
The conserved vector $\mathcal V^0+\mathcal V^1+\mathcal V^2$ of the system~\eqref{Boyko&Popovych:EqModifiedPotSysOfLPE} has zero density and therefore
is a trivial conserved vector.
(In fact the conserved vector $\mathcal V^0+\mathcal V^1+\mathcal V^2$ equals zero.)
This means that the conserved vectors~$-\mathcal V^1$ and~$\mathcal V^2$ are equivalent.

In summary we prove that any simplest potential conservation law of equation~\eqref{Boyko&Popovych:EqGenLPE}
contains a conserved vector which is the pullback of a local conserved vector of~\eqref{Boyko&Popovych:EqGenLPE}.
\end{proof}

\begin{remark}\label{Boyko&Popovych:RemarkOnUsingCriterionOnPotCLs}
The explicit construction of a local conserved vector which is equivalent to~$\mathcal V^2$
in the end of the proof can be replaced by arguments based on the criterion of induction of potential conservation laws by local ones,
cf.\ Proposition~8 from~\cite{Boyko&Popovych:Kunzinger&Popovych2008}.
Indeed  the canonical conserved vector of the modified potential equation~\eqref{Boyko&Popovych:EqModifiedPotLPE},
which is the trivial projection of~$\mathcal V^2$, is associated with the characteristic $\beta=\beta(t,x)$.
This is why
\begin{align*}
\mathop{\rm Div}\mathcal V^2&=\beta\biggl(w_t-\sum_{i=0}^nB^iw_i\biggr)\\
&=\beta\biggl(w_t-\frac{\psi_t}\psi w+\psi\sum_{i=0}^{n-1}\sigma^iu_i+\psi\sum_{i=0}^{n-1}\sigma^iD_x^i\biggl(w_x-\frac{\psi_x}\psi w-u\biggr)\biggr)\\
&=\psi\beta\biggl(v_t+\sum_{i=0}^{n-1}\sigma^iu_i\biggr)+\psi\biggl(\sum_{i=0}^{n-1}(-D_x)^i(\psi\sigma^i\beta)\biggr)(v_x-\alpha u)
+D_x\Phi
\end{align*}
for some differential function~$\Phi$ of~$u$ and~$v$ for which the precise expression is not essential.
(It is a linear function in derivatives of~$u$ and~$v$ with coefficients depending upon~$t$ and~$x$.)
In other words the conserved vector~$\mathcal V^2$ belongs to the conservation law~$\bar{\mathcal F}$ of the potential system~\eqref{Boyko&Popovych:EqPotSysOfLPE}
with the characteristic having the components
\[
\psi\beta \quad\mbox{and}\quad \psi\sum_{i=0}^{n-1}(-D_x)^i(\psi\sigma^i\beta).
\]
This characteristic is completely reduced since it does not depend upon derivatives of~$u$ and~$v$.
In particular it does not depend upon~$v$.
In view of Proposition~8 from~\cite{Boyko&Popovych:Kunzinger&Popovych2008},
the associated local conservation law~$\bar{\mathcal F}$ of the potential system~\eqref{Boyko&Popovych:EqPotSysOfLPE}
is induced by a local conservation law of the initial equation~\eqref{Boyko&Popovych:EqGenLPE}.
\end{remark}

\section{Simplest potential conservation laws: odd order}\label{Boyko&Popovych:SectionOnSimplestPotCLsofLPEsOfOddOrder}

Theorem~\ref{Boyko&Popovych:TheoremOnSimplestPotCLsOfEvenOrderLinEvolEq} cannot be directly extended to $(1+1)$-dimensional linear evolution equations of odd order
since such equations may possess, additionally to linear, quadratic conservation laws.
At the same time it is easy to see that a similar statement can be proved for the case of odd order after restricting to the completely linear case.

\begin{theorem}\label{Boyko&Popovych:TheoremOnSimplestLinearPotCLsOfOddOrderLinEvolEq}
Every linear simplest potential conservation law of any $(1+1)$-dimensional linear evolution equation of odd order,
which is related to a linear potential system,
is induced by a local conservation law of the same equation.
\end{theorem}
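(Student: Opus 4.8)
The plan is to imitate the proof of Theorem~\ref{Boyko&Popovych:TheoremOnSimplestPotCLsOfEvenOrderLinEvolEq} step by step, carrying the adjective \emph{linear} through every reduction so as never to have to face the quadratic conservation laws that an odd-order equation from the class~\eqref{Boyko&Popovych:EqGenLPE}, and hence its modified potential equation, may possess. First I would fix such an equation with odd~$n$, a characteristic $\alpha=\alpha(t,x)\ne0$, i.e.\ a nonzero solution of the adjoint equation~\eqref{Boyko&Popovych:EqAdjLPE}, and the associated potential system~\eqref{Boyko&Popovych:EqPotSysOfLPE}; up to the equivalence $\tilde v=v+f[u]$ of potentials, which does not affect whether simplest potential conservation laws are induced by local ones, this is the general linear potential system of~\eqref{Boyko&Popovych:EqGenLPE}. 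Passing to the modified potential system~\eqref{Boyko&Popovych:EqModifiedPotSysOfLPE} via the point transformation $w=\psi v$, $\psi=1/\alpha$, a given linear simplest potential conservation law $\bar{\mathcal F}$ becomes a linear conservation law of~\eqref{Boyko&Popovych:EqModifiedPotSysOfLPE}.

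On the solution set of~\eqref{Boyko&Popovych:EqModifiedPotSysOfLPE} the derivatives of~$u$ are expressed \emph{linearly} through those of~$w$ by $u_i=D_x^i(w_x-\frac{\psi_x}{\psi}w)$, so eliminating the $u$-derivatives as in the even-order proof keeps the conserved vector linear in the $w$-jets. Hence $\bar{\mathcal F}$ is induced by a conservation law of the modified potential equation~\eqref{Boyko&Popovych:EqModifiedPotLPE} that is again linear, and therefore has a characteristic $\beta=\beta(t,x)$ which depends only on the independent variables and solves the equation adjoint to~\eqref{Boyko&Popovych:EqModifiedPotLPE} (cf.\ the beginning of Section~\ref{Boyko&Popovych:SectionOnLocalCLsAndSimplestPoteFrame}). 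This is the point at which the odd-order argument runs exactly like the even-order one: we use only that a linear conservation law of a linear equation has a characteristic of the form $\beta(t,x)$ satisfying the adjoint equation, not the statement, false for odd~$n$, that all conservation laws are of this kind.

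Next I would establish the odd-order counterpart of Lemma~\ref{Boyko&Popovych:LemmaOnDualDarbouxTransInTermsOfChars}. Since Lemma~\ref{Boyko&Popovych:LemmaOnDualDarbouxTrans} holds for arbitrary order, applying it with $w^0=\psi$, so that $\alpha^0=1/w^0=\alpha$, shows that ${\rm DT}[\alpha]$ maps solutions of~\eqref{Boyko&Popovych:EqAdjLPE} to solutions of the equation adjoint to~\eqref{Boyko&Popovych:EqModifiedPotLPE}. The Frobenius-compatibility argument of Remark~\ref{Boyko&Popovych:RemarkOnDarbouxTransAsLinearMappingOfSolutionSpaceOfLPEs}, applied now to the Darboux transformation ${\rm DT}[\alpha]$ of the linear evolution equation~\eqref{Boyko&Popovych:EqAdjLPE} built from its known solution~$\alpha$, shows that this map is onto (with kernel $\langle\alpha\rangle$). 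Consequently there is a solution $\tilde\alpha=\tilde\alpha(t,x)$ of~\eqref{Boyko&Popovych:EqAdjLPE}, i.e.\ a characteristic of a linear conservation law of~\eqref{Boyko&Popovych:EqGenLPE}, with $\beta={\rm DT}[\alpha]\tilde\alpha=\tilde\alpha_x-\frac{\alpha_x}{\alpha}\tilde\alpha$. From here the conserved-vector matching of the proof of Theorem~\ref{Boyko&Popovych:TheoremOnSimplestPotCLsOfEvenOrderLinEvolEq} can be reproduced verbatim: taking $\mathcal V^1$ and $\mathcal V^2$ to be the pullbacks to~\eqref{Boyko&Popovych:EqModifiedPotSysOfLPE} of the canonical conserved vectors of~\eqref{Boyko&Popovych:EqGenLPE} and of~\eqref{Boyko&Popovych:EqModifiedPotLPE} associated with~$\tilde\alpha$ and~$\beta$, and using $u=w_x+\frac{\alpha_x}{\alpha}w$, one gets that the densities add up to $\beta w+\tilde\alpha u=D_x(\tilde\alpha w)$; adding the trivial conserved vector $\mathcal V^0=(-D_x(\tilde\alpha w),D_t(\tilde\alpha w))$ produces a conserved vector of zero density, hence trivial, so $-\mathcal V^1$ and $\mathcal V^2$ are equivalent. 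Translating back through $w=\psi v$, the conservation law $\bar{\mathcal F}$ then contains a conserved vector equivalent to the $\varpi$-pullback of the local conserved vector $-\mathcal V^1$ of~\eqref{Boyko&Popovych:EqGenLPE}, i.e.\ $\bar{\mathcal F}$ is induced by a local conservation law of~\eqref{Boyko&Popovych:EqGenLPE}.

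The part I expect to need the most care, and the only genuinely new point relative to Theorem~\ref{Boyko&Popovych:TheoremOnSimplestPotCLsOfEvenOrderLinEvolEq}, is the combination carried out in the middle steps: checking that every reduction in the even-order proof preserves linearity, so that the emerging characteristic of~\eqref{Boyko&Popovych:EqModifiedPotLPE} really is a function of~$t$ and~$x$ alone, and re-deriving the surjectivity of~${\rm DT}[\alpha]$ on the solution spaces of the adjoint equations in the odd-order setting, where Lemma~\ref{Boyko&Popovych:LemmaOnDualDarbouxTransInTermsOfChars} is no longer available as a black box. Everything else, including the explicit computation showing $-\mathcal V^1\sim\mathcal V^2$ or, alternatively, the criterion-based argument of Remark~\ref{Boyko&Popovych:RemarkOnUsingCriterionOnPotCLs}, is identical to the even-order case.
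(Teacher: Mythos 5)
Your proposal is correct and follows essentially the route the paper intends: the paper states Theorem~\ref{Boyko&Popovych:TheoremOnSimplestLinearPotCLsOfOddOrderLinEvolEq} as an immediate adaptation of the proof of Theorem~\ref{Boyko&Popovych:TheoremOnSimplestPotCLsOfEvenOrderLinEvolEq} under the restriction to linear conservation laws and linear potential systems, which is exactly what you carry out. Your replacement of Lemma~\ref{Boyko&Popovych:LemmaOnDualDarbouxTransInTermsOfChars} (stated only for even order) by Lemma~\ref{Boyko&Popovych:LemmaOnDualDarbouxTrans} together with the surjectivity argument of Remark~\ref{Boyko&Popovych:RemarkOnDarbouxTransAsLinearMappingOfSolutionSpaceOfLPEs} applied to ${\rm DT}[\alpha]$ on the adjoint equation is precisely the right way to make the argument order-independent.
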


A $(1+1)$-dimensional linear evolution equation~$\mathcal L$ of odd order may additionally possesses two kinds of simplest potential conservation laws:
\begin{itemize}\itemsep=0ex
\item
conservation laws of potential systems constructed with conserved vectors of~$\mathcal L$ which nontrivially contain terms quadratic in derivatives of~$u$ and
\item
quadratic conservation laws of simplest linear potential systems.
\end{itemize}
Potential conservation laws of the first kind are difficult for investigation.
Thus, in contrast to the linear case, related potential systems usually have no analogues of potential equations.
It seems that an only possibility to study conservation laws of these systems
is the direct application of general methods discussed, e.g.,
in~\cite{Boyko&Popovych:Anco&Bluman2002a,Boyko&Popovych:Anco&Bluman2002b,
Boyko&Popovych:Bluman&Cheviakov&Anco2010,Boyko&Popovych:Bocharov&Co1999,Boyko&Popovych:Popovych&Ivanova2005,Boyko&Popovych:Wolf2002}.

Here we consider only potential conservation laws of the second kind.
There exists a simple criterion to check whether a potential conservation law of this kind is induced by a local conservation law
of the initial equation~\eqref{Boyko&Popovych:EqGenLPE}.

\begin{theorem}\label{Boyko&Popovych:TheoremOnSimplestQuadraticPotCLsOfOddOrderLinEvolEq}
Let $\alpha=\alpha(t,x)$ be a nonzero characteristic of a $(1+1)$-dimensional linear evolution equation of odd order~\eqref{Boyko&Popovych:EqGenLPE}
and
\[
\gamma=\Gamma w, \quad \Gamma=\sum_{k=0}^r g^k(t,x)D_x^k, \quad g^r\ne0,
\]
be a characteristic of the corresponding modified potential equation~\eqref{Boyko&Popovych:EqModifiedPotLPE}.
Then the conservation law of potential system~\eqref{Boyko&Popovych:EqPotSysOfLPE}, which is associated with~$\gamma$,
is induced by a local conservation law of~\eqref{Boyko&Popovych:EqGenLPE} if and only if
the solution $\psi=1/\alpha$ of~\eqref{Boyko&Popovych:EqModifiedPotLPE} belongs to the kernel of operator~$\Gamma$, i.e., $\Gamma\psi=0$.
\end{theorem}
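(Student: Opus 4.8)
The plan is to reduce the problem, exactly as in the even-order case, to a computation on the modified potential equation~\eqref{Boyko&Popovych:EqModifiedPotLPE}, and then to apply the criterion of induction of potential conservation laws by local ones (Proposition~8 from~\cite{Boyko&Popovych:Kunzinger&Popovych2008}). First I would transfer the conservation law of the potential system~\eqref{Boyko&Popovych:EqPotSysOfLPE} to one of the modified potential system~\eqref{Boyko&Popovych:EqModifiedPotSysOfLPE} via the point transformation $w=\psi v$, and use that, up to equivalence of conserved vectors, derivatives of~$u$ may be excluded, so that the conservation law in question descends to a local conservation law of the modified potential equation~\eqref{Boyko&Popovych:EqModifiedPotLPEUnconstrained} with characteristic $\gamma=\Gamma w$. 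Computing the divergence of the corresponding conserved vector~$\mathcal V$ of~\eqref{Boyko&Popovych:EqModifiedPotSysOfLPE} along the lines of Remark~\ref{Boyko&Popovych:RemarkOnUsingCriterionOnPotCLs}, I expect
\[
\mathop{\rm Div}\mathcal V=\gamma\biggl(w_t-\sum_{i=0}^nB^iw_i\biggr)
=\Lambda_1[u,v]\,\bigl(v_x-\alpha u\bigr)+\Lambda_2[u,v]\,\Bigl(v_t+\textstyle\sum_{i=0}^{n-1}\sigma^iu_i\Bigr)+D_x\Phi
\]
for suitable differential operators $\Lambda_1,\Lambda_2$, i.e.\ $\mathcal V$ belongs to the conservation law of the potential system~\eqref{Boyko&Popovych:EqPotSysOfLPE} with a characteristic whose components are certain differential expressions built from $\alpha$, $\sigma^i$ and the $g^k$.

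The decisive point is then the structure of that characteristic. By the criterion of~\cite{Boyko&Popovych:Kunzinger&Popovych2008}, the potential conservation law is induced by a local one precisely when, after complete reduction, the characteristic of the potential system does not depend on~$v$. Because $\gamma=\Gamma w$ and $w$ is related to~$v$ through $w=\psi v$ and the first equation of~\eqref{Boyko&Popovych:EqModifiedPotSysOfLPE} (which expresses $w_x$, hence all $w_i$, through $v$, $v_x$ and derivatives of~$u$ on the solution set), substituting $w_i=D_x^i(\psi v)$ reduced modulo the potential system produces a term proportional to $\Gamma\psi\cdot v$ together with terms involving only derivatives of~$v$ (which are differential consequences, hence removable) and terms in derivatives of~$u$. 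So the genuinely $v$-dependent, non-reducible part of the characteristic is governed by the scalar function $\Gamma\psi$: it vanishes identically iff $\Gamma\psi=0$. This will give the "if and only if".

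For the "if" direction I would show directly that $\Gamma\psi=0$ forces the completely reduced characteristic of the potential system~\eqref{Boyko&Popovych:EqPotSysOfLPE} to be independent of~$v$, whence induction follows from Proposition~8 of~\cite{Boyko&Popovych:Kunzinger&Popovych2008}. For the converse, assuming the conservation law is induced, one pulls back a local conserved vector of~\eqref{Boyko&Popovych:EqGenLPE}; its difference from~$\mathcal V$ is trivial on the solution set of the potential system, and comparing the coefficient of~$v$ (the highest power, since everything here is at most quadratic in the unknown and $g^r\neq0$, $\alpha\neq0$) forces $\Gamma\psi=0$. The main obstacle I anticipate is bookkeeping: carefully carrying out the reduction of $\gamma\bigl(w_t-\sum B^iw_i\bigr)$ modulo the potential system so as to isolate cleanly the coefficient of the non-reducible $v$-term and to confirm that it is exactly $\Gamma\psi$ (up to a nonvanishing factor and a total $x$-derivative), rather than some more complicated combination; the identities $B^n=A^n$, $B^{n-1}=A^{n-1}-A^n_x$ and the recursion~\eqref{Boyko&Popovych:sigma-lin} together with the fact that $\psi=1/\alpha$ solves~\eqref{Boyko&Popovych:EqModifiedPotLPE} should be what makes the cancellations work, and verifying this is where the real content of the proof lies.
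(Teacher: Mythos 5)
Your proposal takes essentially the same route as the paper's proof: compute $\mathop{\rm Div}\mathcal V=\gamma\bigl(w_t-\sum_{i=0}^nB^iw_i\bigr)$ as in Remark~\ref{Boyko&Popovych:RemarkOnUsingCriterionOnPotCLs}, read off a characteristic of the potential system~\eqref{Boyko&Popovych:EqPotSysOfLPE} with components $\psi\Gamma(\psi v)$ and $\psi\sum_{i=0}^{n-1}(-D_x)^i\bigl(\psi\sigma^i\Gamma(\psi v)\bigr)$, reduce it modulo the differential consequences of $v_x=\alpha u$ so that the residual $v$-dependence is carried precisely by $\Gamma\psi$ (up to nonvanishing factors), and invoke Proposition~8 of~\cite{Boyko&Popovych:Kunzinger&Popovych2008}. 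The only deviation is your separate comparison argument for the ``only if'' direction, which is redundant because that criterion is already an equivalence for the completely reduced characteristic; otherwise the plan coincides with the paper's proof.
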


\begin{proof}
Denote by~$\mathcal V$ the conserved vector of the potential system~\eqref{Boyko&Popovych:EqPotSysOfLPE},
which is obtained by the pullback of the canonical conserved vector of the modified potential equation~\eqref{Boyko&Popovych:EqModifiedPotLPE},
associated with the characteristic~$\gamma$, and the consequent transformation $v=\alpha w$.
Analogously to Remark~\ref{Boyko&Popovych:RemarkOnUsingCriterionOnPotCLs} we have
\begin{align*}
\mathop{\rm Div}\mathcal V={}&\gamma\biggl(w_t-\sum_{i=0}^nB^iw_i\biggr)\\
={}&(\Gamma w)\biggl(w_t-\frac{\psi_t}\psi w+\psi\sum_{i=0}^{n-1}\sigma^iu_i+\psi\sum_{i=0}^{n-1}\sigma^iD_x^i\biggl(w_x-\frac{\psi_x}\psi w-u\biggr)\biggr)\\
={}&\psi\Gamma(\psi v)\biggl(v_t+\sum_{i=0}^{n-1}\sigma^iu_i\biggr)+\psi\biggl(\,\sum_{i=0}^{n-1}(-D_x)^i\bigl(\psi\sigma^i\Gamma(\psi v)\bigr)\biggr)(v_x-\alpha u)\\
&{}+D_x\Phi
\end{align*}
for some differential function~$\Phi$ of~$u$ and~$v$ the precise expression for which again is not essential.
(It is a quadratic function in derivatives of~$u$ and~$v$ with coefficients depending on~$t$ and~$x$.)
This means that the conserved vector~$\mathcal V$ belongs to the conservation law of the potential system~\eqref{Boyko&Popovych:EqPotSysOfLPE}
with the characteristic~$\lambda$ having the components
\[
\psi\Gamma(\psi v) \quad\mbox{and}\quad \psi\biggl(\,\sum_{i=0}^{n-1}(-D_x)^i\bigl(\psi\sigma^i\Gamma(\psi v)\bigr)\biggr).
\]
For the characteristic~$\lambda$ to be completely reduced we have to exclude the derivatives $v_k$, $k=1,\dots,r+n-1$, from it
using the differential consequences of the equation $v_x=\alpha u$.
The reduced form of~$\lambda$ depends upon the potential~$v$ if and only if $\Gamma\psi=0$.
Therefore the statement to be proved follows from the criterion of induction of potential conservation laws by local conservation laws
formulated in~\cite{Boyko&Popovych:Kunzinger&Popovych2008} as Proposition~8.
\end{proof}

\begin{example}
We construct an example starting from the corresponding modified potential equation with the known space of quadratic conservation laws.
Consider the ``linear Korteweg--de Vries equation''
\begin{equation}\label{Boyko&Popovych:EqLinearKdV}
w_t=w_{xxx}
\end{equation}
which is identical with its adjoint.
It was proved in~\cite{Boyko&Popovych:Popovych&Sergyeyev2010} that
the space of its quadratic conservation laws is spanned by the conservation laws with the characteristics
$\Gamma_{ml}w$, where $\Gamma_{ml}=D_x^m(3tD_x^2+x)^lD_x^m$ and $l,m=0,1,2,\dots$.

As the solution~$\psi$ of the modified potential equation we choose the function $w=x$.
The Darboux transformation ${\rm DT}[x]$ maps equation~\eqref{Boyko&Popovych:EqLinearKdV} to the ``initial'' equation
\begin{equation}\label{Boyko&Popovych:EqImageOfLinearKdVwrtDTx}
u_t=u_{xxx}-\frac3{x^2}u_x+\frac3{x^3}u
\end{equation}
which also is identical with its adjoint.
The solution $\alpha$ of the equation~\eqref{Boyko&Popovych:EqImageOfLinearKdVwrtDTx}, dual to~$\psi$, is $u=1/\psi=1/x$.
The potential system constructed by the conservation law of~\eqref{Boyko&Popovych:EqImageOfLinearKdVwrtDTx} with the characteristic $\alpha=1/x$
has the form
\[
v_x=\frac ux, \quad v_t=\frac{u_{xx}}x+\frac{u_x}{x^2}-\frac u{x^3}.
\]

We have that $\Gamma_{ml}\psi=0$ if and only if $m\geqslant2$.
Therefore a complete set of independent simplest purely potential conservation laws of the equation~\eqref{Boyko&Popovych:EqImageOfLinearKdVwrtDTx},
which are obtained via introducing the potential with the characteristic $\alpha=1/x$, is
exhausted by the quadratic conservation laws constructed by the pullback of the conservations laws of
the corresponding modified potential equation~\eqref{Boyko&Popovych:EqLinearKdV}, having the characteristics
$\Gamma_{ml}w$, where $l=0,1,2,\dots$ and $m=0,1$.
\end{example}



\section{Conclusions}

\looseness=1
In this paper we have studied simplest potential conservation laws of $(1+1)$-dimen\-sional linear evolution equations,
which are constructed via introducing single potentials using linear conservation laws.
Such conservation laws are quite trivial in the case of equations of even order:
All simplest potential conservation laws of any $(1+1)$-dimensional linear evolution equation of even order
are induced by local conservation laws of the same equation and
its space of local conservation laws is exhausted by linear ones.
Similar results concerning equations of odd order are more complicated.
Although all simplest linear potential conservation laws of these equations are induced by local ones,
it is not the case for quadratic conservation laws.
We derive an effective criterion which allows us to check easily whether a quadratic conservation law
of a simplest modified potential equation leads to a purely potential conservation law for the initial equation.
It is true if and only if any characteristic of this conservation law does not vanish for the value $w=1/\alpha$
of the modified potential~$w$, where $\alpha$ is the characteristic of the linear conservation law of the initial equation used for introducing the potential.

A preliminary analysis shows that all the results obtained in this paper for simplest conservation laws
could be easily extended to the case of an arbitrary number of potentials introduced with linear conservation laws.
The first step of this investigation should be the construction of the whole linear potential frame
for the class of $(1+1)$-dimensional linear evolution equations of an arbitrary fixed order
as this was done for order two in~\cite{Boyko&Popovych:Popovych&Kunzinger&Ivanova2008}.
It is obvious that the linear potential frame coincides with the entire potential frame if the order of the equation is even.

The consideration of nonlinear potential systems constructed for equations of odd order with quadratic conservation laws
calls for the development of new tools which are different from those used for linear potential systems.

Another possible direction for further investigations close to the subject of this paper
is the description of potential symmetries of $(1+1)$-dimensional linear evolution equation of arbitrary order,
at least those associated with linear potential systems.
It seems to us that the approach which was developed in~\cite{Boyko&Popovych:Popovych&Kunzinger&Ivanova2008} for the case of equations of order two
and is based upon the construction of the extended potential frame and the reduction of the consideration to the study
of single modified potential equations also has to work for an arbitrary order.

\subsection*{Acknowledgements}

The research was supported by the project P20632 of the Austrian Science Fund.
V.M.B.\ gratefully acknowledges
the warm hospitality extended to him by the Department of Mathematics of the University of Vienna
during his visits in the course of preparation of the present paper.
R.O.P.\ sincerely thanks the University of Cyprus and the Cyprus Research Promotion Foundation (project number $\Pi$PO$\Sigma$E$\Lambda$KY$\Sigma$H/$\Pi$PONE/0308/01)
for support of his participation in the Fifth Workshop on Group Analysis of Differential Equations and Integrable Systems
and the wonderful hospitality during the workshop.

\LastPageEnding

\end{document}